\newcommand{\bra}[1]{\langle #1|}
\newcommand{\ket}[1]{|#1\rangle}
\newtheorem{thm}{Theorem}[section]
\theoremstyle{remark}
\theoremstyle{definition}
\DeclareMathOperator{\polylog}{polylog}
\begin{document}

\author{Michael Ben Or\thanks{E-mail: benor@cs.huji.ac.il} \\ The Hebrew University\\Jerusalem, Israel \and Daniel Gottesman\thanks{E-mail: dgottesman@perimeterinstitute.ca} \\ Perimeter Institute \\ Waterloo, Canada \and Avinatan Hassidim\thanks{E-mail: avinatanh@gmail.com} \\ Bar-Ilan University \\Ramat Gan, Israel}

  \title{Quantum Refrigerator}
  \maketitle

\begin{abstract}
We consider fault-tolerant quantum computation in the context where there are no fresh ancilla qubits available during the computation, and where the noise is due to a general quantum channel.  We show that there are three classes of noisy channels: In the first, typified by the depolarizing channel, computation is only possible for a logarithmic time.  In the second class, of which the dephasing channel is an example, computation is possible for polynomial time.  The amplitude damping channel is an example of the third class, and for this class of channels, it is possible to compute for an exponential time in the number of qubits available.
\end{abstract}

\section{Introduction}
\label{sec:intro}

The threshold theorem~\cite{AB97,KLZ98,Kit97,AGP05} is the central result of the theory of fault-tolerant quantum computation.  It states that, provided the error rate per gate or time step is below some constant threshold value, then arbitrarily long quantum computations are possible with only polylogarithmic overhead.  The threshold theorem tells us that large quantum computers are possible in principle, provided experimentalists can achieve an error rate below the threshold value.

Naturally, then, it is of great interest to determine the precise physical assumptions which are needed for a threshold error rate to exist.  Some assumptions made in the simplest versions of the threshold theorem can certainly be relaxed.  For instance, a threshold exists when gates are restricted to nearest-neighbor interactions~\cite{AB08,Got00}, or when the noise is due to very general weak interactions with a non-Markovian environment~\cite{AGP05,TB05}.  Other assumptions are truly needed.  For instance, it must be possible to perform gates in parallel, and the noise process must not cause many qubits to fail simultaneously in a correlated way.

One assumption that has been traditionally classified as required is the need for fresh ancilla qubits in the course of the computation.  In the course of quantum error correction, ancilla qubits are introduced and used to record the error syndrome.  This can be viewed as a refrigeration process, where entropy which has been introduced into the data qubits by the noise gets pumped out into the ancilla qubits, cooling down the data qubits.  In order for this to work, the ancilla qubits used must be cold themselves, or they cannot absorb the extra entropy from the data.  Since ancilla qubits created at the beginning of the computation are themselves subject to the noise process, we can expect them to heat up over time, eventually making them worthless for refrigeration.  Based on this intuition, we expect to need a continual stream of new, freshly cooled ancilla qubits to keep the error correction running.

At a rigorous mathematical level, this intuition is supported by the result of \cite{ABIN96}, which showed that for noise in the form of a depolarizing channel, it is impossible to compute for longer than $O(\log n)$ time.  We study more general channels and reach a different conclusion.  In particular, we prove the following main theorem.  The precise statements for what the upper and lower bounds mean in each case are deferred to the appropriate sections.

\begin{thm}[Main Theorem]
Let $C$ be any non-unitary channel close to the identity, and consider quantum computations which suffer from noise $C$ on each qubit at each time step.
Up to unitary equivalence, the limit of repeatedly applying $C$ can be either a point or a diameter in the Bloch sphere.\footnote{If $C$ is a dephasing channel followed by a unitary rotation, the limit point of $C$ applied repeatedly may be the center of the Bloch sphere, but we still consider it to be part of the 2nd class.  Therefore, we choose the largest set of fixed points of a channel unitarily equivalent to $C$.  See section~\ref{sec:classification} for more details.}
\begin{enumerate}

\item If the limit is the center of the Bloch sphere, it is possible to
compute for $\tilde{O}(\log(n))$ time steps. This is tight up to $\log \log n$ factors.  An example of this class is the depolarizing channel.

\item If the limit is a diameter, than it is possible to compute on $O(n^a)$ qubits for $O(n^b)$ time steps provided $a + b<1$,
and impossible even to store a single (unknown) qubit for more than $O(n^3)$ time steps.  An example of this class is the dephasing channel.

\item If the limit is a point which is not the center of the Bloch sphere, it
is possible to compute for an exponential number of time steps.  An example of this class is the amplitude damping channel.
\end{enumerate}
\end{thm}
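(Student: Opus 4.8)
The plan is to reduce the whole statement to the behaviour of the single-qubit channel $C$ under iteration, and then treat the three cases separately, matching an achievability construction against an impossibility argument in each.

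First I would carry out the classification. Viewing $C$ through its action on the Bloch ball gives an affine map $\vec r\mapsto M\vec r+\vec c$ with $M$ a real $3\times 3$ matrix; since $C$ is non-unitary but close to the identity, $M$ is close to $I$ and either has spectral radius strictly below $1$ (the transient dies out completely, leaving the single fixed point $(I-M)^{-1}\vec c$) or fixes exactly one line while contracting transversally (the attractor is a segment of that line). Conjugating by a unitary to put the attractor in standard position, and---because a unitary applied \emph{after} the noise is free to correct---passing to the representative of $\{U\circ C\}$ with the largest fixed-point set, one obtains the trichotomy: the attractor is the centre of the ball, a full diameter (the $Z$-axis, say, so that $C$ is dephasing up to a correctable rotation), or an off-centre point $\rho_\infty$, which because $C$ is near the identity is \emph{nearly pure}. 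I would also record the operational reading: in case~1 an idle qubit relaxes toward $I/2$; in case~2 idle $Z$-eigenstates are frozen while a generic idle qubit relaxes to something useless; in case~3 every idle qubit relaxes toward the near-pure $\rho_\infty$.

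Case~3 then goes through most easily: since any qubit left idle for $O(1)$ steps becomes $\varepsilon$-close to $\rho_\infty$, we have an inexhaustible supply of nearly-perfect ancillas (interleave idle steps to re-cool used qubits, or absorb the residual $O(\varepsilon)$ error by a fixed distillation gadget). Plugging this renewable supply into a fault-tolerant scheme that never discards qubits, the standard threshold theorem~\cite{AGP05} goes through essentially verbatim, so a logical qubit in a distance-$d$ code has error $2^{-\Omega(d)}$, and $d=\Theta(n^{c})$ gives survival for $2^{\Omega(n^{c})}$ steps. For case~1 the impossibility half is a ``coldness-budget'' argument in the spirit of~\cite{ABIN96}: let $B(\rho)$ measure the deviation of the $n$-qubit state from $I/2^{n}$ (a Pauli-weighted $2$-norm, or $n-S(\rho)$); gates leave $B$ invariant, while one layer $C^{\otimes n}$ contracts it by a factor $1-\Omega(1)$ because $C$ strictly shrinks every Bloch direction toward the centre, and since $B\le n$ initially while any successfully stored unknown qubit forces $B\ge\Omega(1)$ at the end, $T=O(\log n)$. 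The matching upper bound concatenates a small code to depth $\ell$: with $n$ physical qubits one affords $\ell=\Theta(\log n/\log\log n)$ levels, and a level-by-level bookkeeping of the effective error rate---growing because the ancilla blocks at each level are themselves decaying---shows the top qubit survives for $\tilde\Omega(\log n)$ steps, tight up to the $\log\log n$.

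Case~2 is where the real work lies. Because idle $Z$-eigenstates are exactly frozen, at the outset we can stockpile $\sim n$ genuinely cold ancillas, but---unlike case~3---a heated ancilla cannot be recooled by idling, so this is a \emph{one-shot} reservoir. The construction runs a fault-tolerant computation on $n^{a}$ active qubits and draws on the reservoir at a bounded rate per error-correction round, giving $\Theta(n^{1-a})$ rounds and hence $O(n^{b})$ time for every $a+b<1$ (the strict inequality absorbing polylog overhead). The impossibility half is the main obstacle: now $C$ has a whole line of fixed points, so $B$ above is \emph{not} strictly contracted---deviations along the frozen axis persist---and one needs a subtler monotone, morally the amount of transverse ``cooling power'' still available, which is $O(n)$ at the start and which I expect is drained only polynomially; lower-bounding its consumption rate against an \emph{arbitrary} refrigeration strategy and concluding that storing an unknown qubit fails past $O(n^{3})$ steps is the crux. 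I expect this case-2 lower bound to be the principal difficulty, with the case-1 upper-bound accounting the secondary one.
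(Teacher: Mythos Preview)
Your case~3 argument rests on a false premise: the fixed point $\rho_\infty$ of a non-unital channel close to the identity is \emph{not} nearly pure in general. Writing the Bloch action as $\vec r\mapsto M\vec r+\vec t$ with $M=(1-\epsilon)I$ and $\vec t=\epsilon\vec c$ gives a channel $\epsilon$-close to the identity whose fixed point is $\vec c$, which can lie anywhere in the open Bloch ball. So after idling you do get qubits near $\rho_\infty$, but $\rho_\infty$ may have entropy arbitrarily close to $1$, and such qubits cannot be fed directly into a fault-tolerant protocol as ancillas. The paper handles this by inserting a genuine \emph{refrigerator} stage: a constant-size algorithmic-cooling circuit (\`a la Schulman--Vazirani) that takes $R$ copies of the mixed state $P=\rho_\infty$ and compresses their entropy into $R-1$ waste qubits, outputting one qubit close to $\ket{0}$. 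The architecture is computation $\leftrightarrow$ refrigerator $\leftrightarrow$ storage house, with used qubits recycled through storage (waiting $T=O(\log n)$ steps to re-equilibrate to $P$) and then through the refrigerator again. Your parenthetical ``fixed distillation gadget'' is in the right direction, but you invoke it only to scrub an $O(\varepsilon)$ residual, not to purify a genuinely hot state; the compression step is the heart of the construction and cannot be treated as an afterthought.

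For the case~2 upper bound you correctly flag the difficulty but your ``transverse cooling power'' monotone is not what the paper uses, and it is not clear your approach can be made to work against arbitrary unitaries. The paper's argument is an entropy pigeonhole: $S$ is non-decreasing under dephasing and bounded by $n$, so over $T=n/\delta$ steps some step has entropy gain $\le\delta$; at that step, a quantitative concavity bound derived from Pinsker's inequality forces $\|\rho-Z_i\rho Z_i\|_2$ to be small for \emph{every} $i$, whence by the triangle inequality $\rho$ is $O(n\sqrt{\delta})$-close to its full dephasing, which is separable from the reference. Setting $\delta\sim\epsilon^2/n^2$ gives $T=O(n^3/\epsilon^2)$. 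Your case~1 treatment and the case~2 achievability are essentially right and match the paper.
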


For the third case, we expect that it is not possible to compute for longer than exponential time, but we have not managed to prove this.

The three classes of channels can also be characterized in terms of how they treat the entropy of a state.  The depolarizing class causes entropy to increase for all states until it reaches the maximum for the completely mixed state.  For the dephasing class of channels, entropy is merely non-decreasing: For some states it increases, and for others it stays the same.  For the amplitude damping class of channels, entropy can decrease under the channel.

The result with the greatest practical significance is 3, as it enables quantum computation without fresh qubits.  This doesn't really violate the intuition discussed above, it is just that for this class of channels, we can use the noise itself to cool the ancilla qubits.
The rest of the paper is devoted to proving the main theorem.
Here we give some intuition to the proofs of 1-3.

\begin{itemize}
\item {\bf Limit($C$) is the center of the Bloch sphere: }
The proof is just a slight generalization of \cite{ABIN96}, and
uses the same techniques.

\item {\bf Limit($C$) is a diameter: }
The noise is equivalent
to the phase damping channel. Computing for $O(n^b)$ time steps
can be done by running the computation on $\tilde{O}(n^a)$ qubits,
leaving $n - \tilde{O}(n^a)$ qubits in the state $\ket{0}$
(or some other pure state which is unaffected by the noise).
At each stage of the computation it is possible
to take fresh qubits which were unaffected by the noise.
Since at most $n^a$ qubits are needed every step,
the computation can go for time $n^b$ (theorem \ref{thm:dephasinglower}).

Proving the other direction is a little tricky, as it
is not clear how fast the entropy rises. We will show that it is impossible to store half of an EPR pair for more than $O(n^3)$ time.  To do this, we take advantage of the fact that the entropy cannot decrease.  Therefore, if the state is stored for a long time, there must be some time step during which the entropy rises very little.  For the dephasing channel, this can only happen if the qubits are essentially unentangled and diagonal in the standard basis.  In particular, further dephasing will have little effect on them.  But a completely dephased state has no entanglement with the outside world at all, and therefore we have failed to maintain the EPR pair.

\item {\bf Limit($C$) is a point which is not the center of the sphere: }
In this model we use $\tilde{O}(n)$ qubits instead of $n$ qubits, and divide the system into
three parts. 

The first part is just a standard compilation of the original circuit to a fault-tolerant computation, just as if fresh ancillas were available. Denote by $n'$ the number of qubits which the fault tolerant computation uses. We have that $n' = \tilde{O}(n)$, and that at each step the fault tolerant circuit asks for at most $n'$ fresh ancilla qubits.

The second part of the computation is a storage house. This part
holds $n' T$ chunks of data. The size of each chunk is $R$, and
at each stage of the computation $n'$ chunks are passed to the refrigerator.

The last part is the refrigerator, which is actually built of $O(n')$
refrigeration units which work in parallel. Each such unit
takes a typical sequence of $R$ qubits (where each qubit in the sequence
is very similar to the limit of $C$),
and condenses the entropy to $R - 1$ qubits and to a
fresh qubit. Then the refrigerator passes $n'$ qubits to
the computation, which uses as many as it needs.
The dirty qubits left over from the refrigerator, the qubits used up by the computation, and any unused
fresh qubits are recycled by being sent back to the storage house.

The constant $R$ is chosen so that it will be possible to
condense the entropy. $T$ is chosen
such that after $T$ applications of $C$ the qubit will
be close enough to the limit of $C$. We take the noise to
be weak enough so that not too many mistakes
occur during the process in which the refrigerator condenses the entropy.

\end{itemize}

\subsection{Previous results}

We only state a few fundamental results in this field
which are important for our constructions. The first and foremost
important results are the threshold proofs. \cite{AB97} and \cite{AGP05} prove that quantum
computation is possible under any noisy channel if the channel is
weak enough (but still a constant), and if fresh qubits are
introduced. \cite{KLZ98} and \cite{Kit97} also proved the threshold theorem,
but only for stochastic noise.
\begin{thm}[Threshold Theorem]
There exists a constant \emph{threshold} value $p_T > 0$ with the following property: Suppose a computation undergoes noise $C$ at
each location (each gate and each time step), with $\|C - I \|_\Diamond < p_T$. Then given any
quantum circuit with depth $t$ acting on $n$ qubits, there exists a fault-tolerant simulation of the circuit whose output has
statistical distance at most $\epsilon$ away from that of the ideal circuit.  The fault-tolerant simulation uses
$O(n \polylog (nt/\epsilon))$ qubits and has depth $O(t \polylog (nt/\epsilon))$.
\end{thm}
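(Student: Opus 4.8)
The plan is to follow the standard route to the threshold theorem via concatenated quantum error-correcting codes and recursive simulation, in the style of~\cite{AGP05}. First I would fix a quantum code $\mathcal{Q}$ --- for concreteness the $7$-qubit CSS (Steane) code --- together with a universal set of fault-tolerant gadgets: transversal encoded Clifford gates, transversal encoded measurement in the computational basis, fault-tolerant preparation of encoded $\ket{0}$ and of the ancilla needed to implement a non-Clifford gate (an encoded $\pi/8$ or magic-state ancilla), and a fault-tolerant error-correction gadget (Shor-, Steane-, or Knill-style syndrome extraction using verified cat or encoded-Bell ancillas). The key design requirement is that each gadget be \emph{$1$-good}: if at most one elementary fault occurs inside it, then no error propagates to more than one qubit per code block, so the block stays strictly inside the code's correction radius.

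Second, I would define the level-$1$ simulation: replace every location of the ideal circuit by the corresponding encoded gadget and insert an error-correction gadget on each block after every encoded gate. Grouping each encoded gate with its adjacent error corrections yields an \emph{extended rectangle}, and the core lemma is that an extended rectangle containing at most one fault is \emph{correct}, meaning its action agrees with the ideal gate followed by an ideal decoder. Counting \emph{pairs} of fault locations then bounds the level-$1$ failure probability of any extended rectangle by $c\,p^2$, with $c$ depending only on the (constant) gadget sizes. Setting $p_T = 1/c$ and iterating the construction $k$ times (concatenating the code $k$ times, each physical qubit becoming a level-$(k{-}1)$ block) gives, by induction on the level, an effective error rate per level-$k$ location of at most $p_T (p/p_T)^{2^k}$, which is doubly exponentially small in $k$ whenever $p < p_T$.

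Third is the resource accounting. Each level of concatenation multiplies the qubit count by a constant $A$ (the block size times an $O(1)$ ancilla-overhead factor) and the circuit depth by a constant $B$ (the depth of a rectangle). A depth-$t$, $n$-qubit circuit has $N = O(nt)$ locations, hence $O(nt)$ extended rectangles, so by a union bound it suffices to drive the level-$k$ error rate below $\epsilon/N$; since that rate is doubly exponentially small, $2^k = O(\log(nt/\epsilon))$, i.e.\ $k = O(\log\log(nt/\epsilon))$, suffices. Then the overhead factors are $A^k = (\log(nt/\epsilon))^{\log_2 A} = \polylog(nt/\epsilon)$ and likewise $B^k = \polylog(nt/\epsilon)$, giving $O(n\,\polylog(nt/\epsilon))$ qubits and depth $O(t\,\polylog(nt/\epsilon))$, with the output within statistical distance $\epsilon$ of the ideal circuit's output. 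Because the hypothesis $\|C - I\|_\Diamond < p_T$ lets us write $C$ as the identity plus a small completely positive perturbation, the adversarial fault-counting bound applies verbatim to the channel noise.

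The main obstacle is the level-$1$ correctness lemma: proving that the hand-built gadgets really are fault-tolerant. One must check that error propagation through each transversal gate keeps faults confined to one qubit per block, that the syndrome-extraction gadget does not itself convert a single ancilla fault into a multi-qubit data error (the reason verified cat states or Steane/Knill ancillas are required), and that the non-Clifford gadget --- the one place transversality fails --- is realized via ancilla preparation plus Clifford corrections so that it too is $1$-good. Packaging this uniformly across all gadget types, and carrying the ``ideal decoder'' equivalence correctly through an entire extended rectangle (including the subtlety that consecutive rectangles overlap, which is exactly what the extended-rectangle formalism is designed to handle), is the technically heavy part; everything downstream is a geometric-series estimate.
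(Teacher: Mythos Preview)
The paper does not give its own proof of this theorem: it is stated in the ``Previous results'' subsection purely as a cited background result, attributed to \cite{AB97,KLZ98,Kit97,AGP05}, and used thereafter as a black box. So there is no in-paper proof to compare your proposal against.

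That said, your sketch is a faithful outline of the standard proof, and in particular tracks the extended-rectangle formalism of \cite{AGP05} (one of the references the paper cites for this theorem). The choice of the Steane code, the $1$-good gadget criterion, the level-reduction lemma yielding an effective error $p_T(p/p_T)^{2^k}$, and the resource accounting via $A^k, B^k = \polylog(nt/\epsilon)$ are all exactly how that reference proceeds. Your identification of the level-$1$ correctness lemma and the overlapping-rectangle subtlety as the technical crux is accurate. In short: your proposal is correct and is essentially the proof the paper is \emph{citing}, not one it supplies.
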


An assumption of the threshold theorem is that fresh ancilla qubits can be supplied at each step.
If those ancilla qubits are not available, the threshold theorem no longer directly applies.
$\| \cdot \|_\Diamond$ is the diamond norm~\cite{KSV02}, equivalent in this case to the completely bounded operator
norm.

Our paper can be viewed as a generalization of \cite{ABIN96}. \cite{ABIN96} showed that for the
depolarizing channel, computation is impossible for more than
$O(\log(n))$ steps. Razborov~\cite{Raz06} used a similar argument
to show that a very noisy channel is useless. These two
results could create the false impression that quantum computation
is impossible without the introduction of fresh qubits. To the best
of our knowledge, phase damping, amplitude damping, or more complex
kinds of noise did not receive a full treatment.

\section{The Model}

We will work in a model in which gates are perfect, but after each time step, each qubit undergoes a single-qubit channel $C$.  Thus, we alternate layers consisting of gates with layers consisting of noise.  Every qubit undergoes the same noise at each time step, and there is no memory or correlation between different qubits or different time steps.  One could consider more complicated models in which qubits undergo different kinds of noise depending on the gates performed on them (i.e., allowing gate errors), or in which different qubits have different error models, but we believe this would simply add complication without fundamentally changing the results.

The lower bounds (showing that computation is possible for a certain amount of time) are proven by allowing arbitrary single-qubit and two-qubit gates in parallel at each time step, as is usual for fault-tolerant constructions.  The upper bounds (showing that computation is impossible for more than a certain amount of time) are more generous, and allow arbitrary many-qubit unitaries between the applications of noise at each time step.

Because the only control available to us is via unitary gates, it is not possible to measure error syndromes and perform corrections based on classical processing of the outcome.  However, \cite{AB08} showed that a threshold still exists when all control is via unitary gates and measurement in the middle of the computation is not possible.  Therefore, we will use this version of the threshold theorem throughout the paper.

\section{Classification of Qubit Channels}
\label{sec:classification}

Using the Bloch sphere representation of a qubit, any density matrix $\rho$ can be written as
\begin{equation}
\rho = \frac{1}{2} (I + \mathbf{w} \cdot \mathbf{\sigma}),
\end{equation}
where $\mathbf{w}$ is a real vector with norm $\leq 1$ (i.e., inside the Bloch sphere) and $\mathbf{\sigma}$ is the vector of Pauli matrices $(X, Y, Z)$.  In this representation, the action of any quantum channel $C$ can be written~\cite{KR01} as $C(\rho) = U C'(V \rho V^\dagger) U^\dagger$, with
\begin{equation}
C' \left(\frac{1}{2} [I + \mathbf{w} \cdot \mathbf{\sigma}] \right) = \frac{1}{2} [I + (\mathbf{t} + T \mathbf{w}) \cdot \sigma],
\end{equation}
such that $T$ is diagonal.  $\mathbf{t} = 0$ iff the channel $C$ is unital (meaning $C(I) = I$).  That is, up to unitaries $U$ and $V$, we can regard the channel as shifting the Bloch vector $\mathbf{w}$ and rescaling it in the $X$, $Y$, and $Z$ coordinate axes.  In order for the channel to be positive, the output vector $\mathbf{t} + T \mathbf{w}$ must also have norm at most $1$.

In the future, we will ignore the unitaries $U$ and $V$.  This is because they are single-qubit unitaries and can be absorbed into the gate layers before and after the noise.  Furthermore, if $\mathbf{t}=0$ and $T=I$, $C$ is actually unitary and there is no noise.

There are unital maps (the $\mathbf{t}=0$ case) for which all three eigenvalues of $T$ are less than $1$.  In this case, all vectors $\mathbf{w}$ shrink towards the origin, and the only fixed point of the map is the center of the Bloch sphere, the maximally mixed state.  This gives us the first class of maps, including the \emph{depolarizing channel}
\begin{equation}
C(\rho) = (1-p) \rho + \frac{p}{3} (X \rho X + Y \rho Y + Z \rho Z).
\end{equation}
In fact, a map in this class is always a \emph{Pauli channel}, with probabilities $p_X$, $p_Y$, and $p_Z$ of errors $X$, $Y$, and $Z$:
\begin{equation}
C(\rho) = (1-p_X - p_Y - p_Z) \rho + p_X X \rho X + p_Y Y \rho Y + p_Z Z \rho Z.
\end{equation}
We can express $p_X$, $p_Y$, and $p_Z$ in terms of the eigenvalues $\lambda_X$, $\lambda_Y$, and $\lambda_Z$ of $T$, for instance
\begin{equation}
p_X = \frac{1}{4} (1+\lambda_X - \lambda_Y - \lambda_Z).
\end{equation}
By~\cite{KR01}, $|\lambda_Y + \lambda_Z| \leq |1+\lambda_X|$ for a completely positive map, so $p_X \geq 0$.  Therefore any unital qubit channel corresponds to a sensible Pauli channel, up to unitary rotations.

Next, we can consider unital channels for which $T$ has one or two eigenvalues equal to $1$.  We may assume without loss of generality (again, up to unitary rotations) that the $Z$ direction is one that is not contracted, so the $Z$ eigenvalue $\lambda_Z = 1$.  Then, by~\cite{KR01}, in order for $C$ to be completely positive, $|\lambda_X - \lambda_Y| \leq |1-\lambda_Z| = 0$.  Therefore, only one eigenvalue can be $1$ and the other two directions must contract by the same amount.  This can be achieved with the \emph{dephasing channel}
\begin{equation}
C(\rho) = (1-p) \rho + p Z \rho Z.
\end{equation}
These channels leave the $Z$ axis fixed and all other points contract towards it.  This is the second class of channels.

Finally, we have non-unital channels.  This is the third class of channels.  One such channel is the \emph{amplitude damping channel}
\begin{equation}
C(\rho) = \begin{pmatrix} 1 & 0 \\ 0 & \sqrt{1-p} \end{pmatrix} \rho \begin{pmatrix} 1 & 0 \\ 0 & \sqrt{1-p} \end{pmatrix} + \begin{pmatrix} 0 & \sqrt{p} \\ 0 & 0 \end{pmatrix} \rho \begin{pmatrix} 0 & 0 \\ \sqrt{p} & 0 \end{pmatrix}.
\end{equation}
A qubit non-unital channel always has a unique fixed point:  Since $\mathbf{t} \neq \mathbf{0}$ and the output vector $\mathbf{t} + T \mathbf{w}$ must have norm at most $1$ for all $\mathbf{w}$, it follows that $T$ has all eigenvalues strictly less than $1$, so the map is strictly contractive.  Therefore it has a unique fixed point, which we can identify as the vector
\begin{equation}
\left( \frac{t_X}{1-\lambda_X}, \frac{t_Y}{1-\lambda_Y}, \frac{t_Z}{1-\lambda_Z} \right).
\end{equation}

\section{Depolarizing Class}\label{sub-unital-point}

In this section we prove that $n$ qubits suffice for a circuit of
depth $\tilde{\Theta}(\log(n))$.

For the lower bound, we invoke a result from \cite{ABIN96}:

\noindent {\bf Theorem 4 from \cite{ABIN96}.} {\it If a boolean function $f$ can be computed by a quantum circuit of size $s$ and depth $d$,
then $f$ can be computed by a noisy quantum circuit of size $O(s \polylog(s)) \cdot 2^{O(d \polylog(d))}$ and depth
$O(d  \polylog(d))$.}

The theorem is stated only for depolarizing noise, but their proof works for any weak Pauli channel, which, as discussed in section~\ref{sec:classification}, covers all unital channels.  Basically, the argument is to continually purify ancillas by comparing them to each other with majority gates.  The number of reliable ancillas available drops exponentially with time, which produces the exponential factor in the size.  The purified ancillas are used to run a regular fault-tolerant protocol.  The consequence of this theorem is that using $n$ qubits, it is possible to run any computation of $\tilde{O}(\log n)$ depth.

To show that we can compute for a time at most $O(\log n)$, we can use essentially
the same proof as for Theorem 3 of \cite{ABIN96}.  Let $p = \min(p_X, p_Y, p_Z)$.
Let $D_p$ be the depolarizing channel with probability $p$.  $D_p$ is no more noisy than the actual
Pauli channel $C$, and by concavity of the entropy, the entropy increases at least as much under
$C$ as it does under $D_p$.  By lemma~8 of \cite{ABIN96}, the information $I(\rho) = n-S(\rho)$
decreases by at least a factor $1-p$ under $D_p$, so it also decreases by at least a factor $1-p$
under $C$.  The rest of the proof is the same as in \cite{ABIN96}.

\section{Dephasing channel}

In this section we consider a dephasing channel which acts on each qubit independently. We give polynomial upper and lower bounds for the ability to compute.

Consider a dephasing channel $\rho \rightarrow (1 - p) \rho + Z \rho Z^\dagger$.
\begin{thm}
\label{thm:dephasinglower}
  There exists  a constant $p_T$, such that
if $p \le p_T$, then one can reliably perform any computation which involves $O(n^a)$ qubits for $O(n^b)$ time, provided $a + b < 1$.
\end{thm}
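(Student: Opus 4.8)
The plan is to reduce the problem to the ordinary threshold theorem by observing that the dephasing channel has a large set of states it leaves completely untouched: any qubit in a computational basis state $\ket{0}$ or $\ket{1}$ (or any classical mixture thereof) is a fixed point of $\rho \mapsto (1-p)\rho + pZ\rho Z^\dagger$. This means a qubit that is never involved in a gate and sits in $\ket{0}$ remains perfectly fresh no matter how long the computation runs. So the strategy is: partition the $n$ available qubits into a ``work register'' of size $\tilde{O}(n^a)$ and a ``reservoir'' of $n - \tilde{O}(n^a)$ qubits all initialized to $\ket{0}$. The fault-tolerant simulation of the target circuit (which uses $O(n^a)$ logical-plus-ancilla qubits at any instant, after the polylog blowup from the threshold theorem) runs on the work register, and whenever it calls for a fresh ancilla we swap in an untouched $\ket{0}$ qubit from the reservoir and swap the spent qubit back out.

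The key steps, in order, are as follows. First, invoke the threshold theorem (the unitary-gates, no-mid-circuit-measurement version from \cite{AB08}, as the paper specifies) to compile the desired computation on $O(n^a)$ qubits and $O(n^b)$ depth into a fault-tolerant circuit using $n' = O(n^a \polylog(n^a \cdot n^b / \epsilon)) = \tilde{O}(n^a)$ qubits and depth $t' = \tilde{O}(n^b)$, with per-step fresh-ancilla demand at most $n'$. Second, account for the total ancilla consumption: over $t'$ time steps the circuit requests at most $n' t' = \tilde{O}(n^{a+b})$ fresh qubits in total; since $a+b < 1$, for large enough $n$ this is at most $n - n'$, so the reservoir is large enough to service every request with a genuinely fresh $\ket{0}$. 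Third, verify that the noise model as applied here is exactly the one the threshold theorem assumes: each qubit actively participating in the fault-tolerant circuit at a given step experiences the channel $C$ (here $p$-dephasing) between gate layers, which is the standard i.i.d. single-qubit noise at rate $p$, so choosing $p_T$ to be the threshold constant of \cite{AB08} (for the dephasing channel, or at worst its value for general channels) makes the simulation succeed with the claimed fidelity. Fourth, handle the bookkeeping of the swaps: a SWAP gate is a two-qubit unitary, it is itself a ``location'' subject to noise, but that is already inside the fault-tolerant accounting — we simply treat ``fetch a fresh ancilla from the reservoir'' as part of the ancilla-preparation subroutine, whose first act is a noiseless-in-$\ket{0}$ qubit entering, then one SWAP; this at most doubles constants.

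The one genuine subtlety — and the step I'd expect to need the most care — is that the qubits sitting in the reservoir are only exactly fixed if they are perfectly in $\ket{0}$; but once a spent qubit is swapped \emph{back} into the reservoir it is no longer pure, and if it later gets pulled out again it would no longer be fresh. The clean fix is the total-budget argument above: because $a+b<1$, we never need to recycle — every fresh-ancilla request in the entire computation can be met by a previously-untouched $\ket{0}$ qubit, and spent qubits are simply parked and never reused. One must check this inequality carefully against the polylog factors: $n' t' = \tilde{O}(n^{a+b})$ must be $\le n - n' = n - \tilde{O}(n^a)$, which holds for all sufficiently large $n$ precisely because $a+b<1$ and $a<1$ (the $\polylog$ factors are subpolynomial and cannot bridge the gap between exponents $a+b$ and $1$). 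A secondary point to state explicitly is that a qubit idling in the work register between uses — if any — is protected by the error-correcting code just as in the standard threshold construction, so the only qubits relying on the ``fixed-point'' property are those in the reservoir, and those are touched by no gate until the moment they are called, hence remain exactly $\ket{0}$ under any number of applications of the dephasing channel.
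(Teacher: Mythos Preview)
Your proposal is correct and follows essentially the same argument as the paper: reserve $n - \tilde{O}(n^a)$ qubits in $\ket{0}$ (a fixed point of dephasing), run the standard fault-tolerant compilation on the remaining $\tilde{O}(n^a)$ qubits, and supply each fresh-ancilla request from the reservoir without recycling, which works because the total demand $\tilde{O}(n^{a+b})$ fits in the reservoir when $a+b<1$. Your discussion of the SWAP bookkeeping and the no-recycling point is more explicit than the paper's, but the underlying idea is identical.
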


\begin{proof}

The standard techniques of fault tolerance can be used to compile a circuit into a fault tolerant circuit, provided the noise is below some
threshold $p_T$ and that there is a supply of fresh qubits. If the original circuit acted on $m$ qubits and runs for a time $t$, the fault tolerant one acts on $m' = c_0 m \polylog (mt)$ qubits and has depth $t' = c_1 t \polylog(mt)$.

Suppose $m = O(n^a)$ and $t = O(n^b)$.  Given $n$ physical qubits, put aside $n-m'$ of them, and run the fault-tolerant simulation of the circuit
on the remaining $m'$.  The qubits that have been set aside begin in the state $\ket{0}$, so are unaffected by the dephasing
channel.  At each time step, the fault-tolerant circuit calls for up to $m'$ fresh ancillas.  We can supply them out of the qubits
set aside at the beginning.  After a qubit is used as an ancilla, it is not used again.  Therefore, we run out of fresh ancillas after
a time $(n-m')/m' = \Theta(n^{1-a}/ \polylog(mt))$.  This is sufficiently long to run the full computation if $t' \leq (n-m')/m'$, in other words,
if $O(n^b \polylog(mt)) \leq \Theta(n^{1-a}/\polylog(mt))$.  This will be true if $a + b < 1$.
\end{proof}

We also prove an impossibility result:

\begin{thm}
Begin with an EPR pair, with one qubit in a perfect reference system $R$ and the other qubit stored in a noisy quantum computer,
which undergoes a dephasing channel acting on each qubit with probability $p$ at each time step.
For any $p > 0$, after a time $T = \frac{(\ln 2) n^3}{8 p (1 - p)\epsilon^2}$, any quantum circuit applied to the noisy system will
still lead to a state which has distance (in the $2$-norm) at most $\epsilon$ from a separable state with the reference system.
\end{thm}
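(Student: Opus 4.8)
The plan rests on the two structural facts stressed in the introduction. First, the dephasing layer $\mathcal{D}$ — which dephases each of the $n$ computer qubits with probability $p$ — is a convex combination of the unitaries $Z^{\mathbf{x}}$, hence unital, so $S(\mathcal{D}(\sigma))\ge S(\sigma)$ for every $\sigma$, while the circuit's unitary layers touch only the computer and do not change the entropy of the joint state of $R$ and the computer. Thus $t\mapsto S(\rho_t)$ is nondecreasing, where $\rho_t$ is the joint state of $R$ and the computer after $t$ steps, and since $0\le S(\rho_t)\le n+1$ there is a step $t^\star\le T$ at which the dephasing layer raises $S$ by at most $(n+1)/T$; set $\rho:=\rho_{t^\star}$. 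Second, a state that is block-diagonal in the computational basis of the computer register has the form $\sum_{\mathbf{x}} q_{\mathbf{x}}\,\tau^{(R)}_{\mathbf{x}}\otimes\ket{\mathbf{x}}\bra{\mathbf{x}}$ and is therefore separable (indeed classical--quantum) across the $R{:}C$ cut. The whole argument is to show that $\rho$ is $2$-norm close to such a state and that this survives the remaining steps up to time $T$.

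The heart of the proof is a single-qubit estimate: if $\mathcal{D}_{p,i}$ dephases computer qubit $i$ and $\mathcal{Q}_i$ projects an operator onto the part that is off-diagonal on qubit $i$, then $S(\mathcal{D}_{p,i}(\sigma))-S(\sigma)\ge \frac{p(1-p)}{\ln 2}\,\|\mathcal{Q}_i(\sigma)\|_2^2$. To prove this, write $\mathcal{D}_{p,i}=(1-2p)\,\mathrm{id}+2p\,\mathcal{P}_i$, with $\mathcal{P}_i$ the pinching onto the computational basis of qubit $i$; concavity of the von Neumann entropy gives $S(\mathcal{D}_{p,i}(\sigma))-S(\sigma)\ge 2p\,[S(\mathcal{P}_i(\sigma))-S(\sigma)]=2p\,D(\sigma\,\|\,\mathcal{P}_i(\sigma))$, using the identity $S(\mathcal{P}(\sigma))-S(\sigma)=D(\sigma\,\|\,\mathcal{P}(\sigma))$ valid for any pinching; then quantum Pinsker yields $D(\sigma\,\|\,\mathcal{P}_i(\sigma))\ge\frac{1}{2\ln 2}\|\sigma-\mathcal{P}_i(\sigma)\|_1^2\ge\frac{1}{2\ln 2}\|\mathcal{Q}_i(\sigma)\|_2^2$. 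Because dephasing the remaining computer qubits afterwards cannot lower the entropy, the same inequality holds with the full layer $\mathcal{D}$ in place of $\mathcal{D}_{p,i}$, so at $t^\star$ we obtain $\|\mathcal{Q}_i(\rho)\|_2^2\le \frac{(n+1)\ln 2}{p(1-p)\,T}$ for every $i$.

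Now let $\mathcal{P}$ pinch \emph{all} computer qubits; $\mathcal{P}(\rho)$ is block-diagonal on the computer, hence separable across $R{:}C$. Expanding $\rho$ in the orthogonal Pauli basis, a string whose off-diagonal ($X$ or $Y$) support on the computer has size $k\ge 1$ contributes its Hilbert--Schmidt weight $k$ times to $\sum_i\|\mathcal{Q}_i(\rho)\|_2^2$ and once to $\|\rho-\mathcal{P}(\rho)\|_2^2$, so $\|\rho-\mathcal{P}(\rho)\|_2^2\le\sum_{i=1}^n\|\mathcal{Q}_i(\rho)\|_2^2\le\frac{n(n+1)\ln 2}{p(1-p)\,T}$, which is at most $\epsilon^2$ by the choice of $T$ (the exponent $n^3$ and the constant $8$ come precisely from this somewhat lossy passage from the single-qubit estimate to the whole register; small $n$ is immediate). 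Hence $\rho_{t^\star}$ lies within $\epsilon$ in the $2$-norm of the separable state $\mathcal{P}(\rho_{t^\star})$. Finally, every later operation is either a unitary on the computer — an $L^2$-isometry that maps $R{:}C$-separable states to $R{:}C$-separable states — or a dephasing layer, which acts diagonally in the Pauli basis with multipliers in $[-1,1]$ and is therefore an $L^2$-contraction that also preserves separability across $R{:}C$; applying this sequence of operations to the separable approximant of $\rho_{t^\star}$ produces a separable approximant of $\rho_T$ still at distance at most $\epsilon$, which is exactly the claim.

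The main obstacle is the step just sketched: converting the \emph{single} near-stationary time step handed to us by pigeonhole into a \emph{uniform and quantitative} statement that the entire $n$-qubit register is essentially classical and carries essentially no entanglement with $R$. This is the difficulty flagged in the introduction (``it is not clear how fast the entropy rises'' — the entropy could sit on a long plateau), and it is resolved by the concavity / pinching-identity / Pinsker chain, which turns a tiny entropy deficit into an $L^1$ (hence $L^2$) bound on coherences, together with the elementary fact that the register's total coherence is controlled by the sum of its single-qubit coherences.
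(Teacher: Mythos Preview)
Your proof is correct and follows the same route as the paper: the joint entropy is nondecreasing and bounded, so pigeonhole gives a near-stationary time step; a Pinsker-type bound then converts the tiny entropy gain under single-qubit dephasing into an $L^2$ bound on the off-diagonal part, and combining over qubits shows the state is $\epsilon$-close to its (separable) full pinching.

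Two points of execution differ from the paper and are worth noting. First, you aggregate the $n$ single-qubit bounds via the Pauli-basis identity $\|\rho-\mathcal{P}(\rho)\|_2^2\le\sum_i\|\mathcal{Q}_i(\rho)\|_2^2$, whereas the paper uses the triangle inequality on $\|\cdot\|_2$; your route actually saves a factor of $n$ (your chain yields $T=O(n^2)$, not $n^3$, contrary to your parenthetical remark about where the $n^3$ comes from). Second, you explicitly propagate the separable approximant from $t^\star$ to $T$ by observing that unitaries on $C$ and dephasing layers are $L^2$-nonexpansive and preserve $R{:}C$ separability, a step the paper leaves implicit. One small caveat: the decomposition $\mathcal{D}_{p,i}=(1-2p)\,\mathrm{id}+2p\,\mathcal{P}_i$ requires $p\le\tfrac12$; for $p>\tfrac12$ absorb a $Z$ into the adjacent unitary layer and work with $1-p$.
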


\begin{proof}

The proof is by contradiction. Suppose that there is a computation which could
protect half an EPR pair against dephasing noise, such that the original qubit could
be recovered (with high fidelity) after $T$ steps.  We track the evolution of the computation.

At any time step, let $q_1, \ldots, q_n$ denote the qubits before the noise
is applied, and let $q_1', \ldots, q_n'$ denote the qubits after the noise. $R$ is unaffected by the noise.  We
have that

\begin{equation}
S(q_1', \ldots, q_n',R) = \sum_i S(q_i'|q_{i+1}',q_{i+2}', \ldots, q_n',R) + S( R)
\geq \sum_i S(q_i'|q_{i+1}, \ldots, q_n,R) + S( R)
\end{equation}
because conditional entropy $S(A|B)$ cannot decrease under a CP map on B (which can only lose information about A).
Furthermore, $ S(q_i'|q_{i+1},\ldots ,q_n,R) \geq S(q_i|q_{i+1},\ldots,q_n,R)$ since a dephasing channel cannot decrease the entropy.
This means
\begin{equation}
 S(q_1', \ldots, q_n',R) \geq S(q_1, \ldots, q_n,R) + [S(q_1'|q_2, \ldots, q_n,R) - S(q_1|q_2, \ldots, q_n, R)].
\label{eqn:entropyincrease}
\end{equation}
Naturally, this holds for any order of the qubits.

Let $\delta = \frac{8 p (1 - p)\epsilon^2}{(\ln 2) n^2}$. At every step of the computation, if there is at least one qubit $q_i$ for which $S(q_i'| q_{-i})
> S(q_i|q_{-i}) + \delta$, then the entropy of the whole state
increases by at least $\delta$ by (\ref{eqn:entropyincrease}).
Here $S(q_i|q_{-i})$ means the conditional entropy of qubit $q_i$ conditioned on all other qubits in the system, including $R$.

However, since the entropy is bounded by $n$, and the computation stores the encoded half of
an EPR pair for $T = n / \delta$ time, there must be a time step $t$ in which the entropy increase is at most $\delta$. At this
time step, it must be that for every qubit $q_i$
\begin{equation}
S(q_i'|q_{-i}) - S(q_i|q_{-i}) \leq \delta
\end{equation}

We show that the decoding procedure will fail if applied after time $t$, by showing that at time
$t$ the state on the $n$ qubits is close to diagonal, and therefore the entanglement
to the reference system (the other half of the EPR pair encoded by the computation) is lost.

At time $t$ we have for every qubit $q_i$
\begin{equation}
S(q_i'|q_{-i}) - S(q_i|q_{-i}) = S( p \rho + (1-p) Z_i \rho Z_i) -
S(\rho),
\end{equation}
where $\rho$ is the joint state of all qubits in the computer and $R$.
At this point, concavity of the entropy tells us that the RHS must be non-negative, but we wish to set a
tighter bound in order to compare with the upper bound of $\delta$.

Pinsker's inequality (Theorem $8.6$ in \cite{Watrous})
gives a lower bound on the relative entropy:
\begin{equation}
S(\rho || \xi) \geq \frac{1}{2 \ln 2} \|\rho - \xi\|_2^2
\end{equation}
Concavity of the entropy can be proved from sub-additivity, and sub-additivity follows from the
positivity of the relative entropy, so Pinsker's inequality can give us a tighter version of concavity:
\begin{equation}
S[(1-p)\rho + p\xi] - [(1-p)S(\rho) + pS(\xi)] \geq \frac{2 p(1-p)}{\ln 2} \|\rho - \xi\|_2^2.
\end{equation}
Using this, we get that for any state $\xi$, if  $S[(1-p)\rho +
p \xi] - [(1- p)S(\rho) + p S(\xi) ] \le \delta$, then
\begin{equation}
\| \rho - \xi \|_2^2 \le \frac{(\ln 2) \delta}{2p(1-p)}.
\end{equation}
Plugging in $\xi = Z_i \rho Z_i$ gives
\begin{equation}
\| \frac{1}{2} (\rho + Z_i \rho Z_i) - \rho \|_2  = \frac{1}{2} \| Z_i \rho Z_i - \rho
\|_2 \le \frac{1}{2} \sqrt{(\ln 2) \delta / (2p(1-p))},
\end{equation}
so dephasing $\rho$ on qubit $i$ doesn't change it much. Since this holds for
every $i$, we can apply the triangle inequality and get
\begin{equation}
 \|\sigma  - \rho \|_2 \le  n/2 \sqrt{(\ln 2) \delta / (2p(1-p))},
\end{equation}
where $\sigma$ is the diagonal density matrix obtained by completely
dephasing $\rho$ on every qubit in the computer (but not $R$). Plugging in $\delta = \frac{8 p (1 - p)\epsilon^2}{(\ln 2) n^2}$ gives that
\begin{equation}
\|\sigma  - \rho \|_2 \le \epsilon.
\end{equation}
A perfect decoding map applied to the system at this point to try to extract the EPR pair cannot change the
entanglement with the reference system R.  Since the completely dephasing map is an entanglement-breaking
channel, $\sigma$ decodes to a separable state with R.  Therefore, $\rho$ decodes to a state with $2$-norm
distance at most $\epsilon$ from a separable state, which is necessarily far from the original EPR pair.

\end{proof}

\section{Non unital channels}

\begin{thm}
\label{thm:nonunital}
For any single-qubit density matrix $P \neq I/2$, there exists a threshold value $p_T$ such that if $C$ is any non-unital qubit channel with $\|C - I\|_\Diamond < p_T$ and fixed point $P$, then it is possible to perform any quantum computation of width $n$ and depth $D$ using $O(n \polylog(nD))$ physical qubits and depth $O(D \polylog(nD))$, where each physical qubit undergoes the noise $C$ at each time step.
\end{thm}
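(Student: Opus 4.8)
The plan is to realize the three‑part architecture sketched in the introduction. Normalize $C$ first: by the classification of Section~\ref{sec:classification} it is unitarily equivalent to a strictly contractive map, and absorbing one further single‑qubit basis change we may take its fixed point to be diagonal, $P=\mathrm{diag}(1-\eta,\eta)$ with $\eta<1/2$ (the fixed point of a non‑unital channel is never $I/2$, and is excluded by hypothesis anyway). Since the unitaries $U,V$ in the Bloch normal form are free, interleave them into the perfect gate layers so that one noisy step acts on an idle qubit exactly as the contractive map with fixed point $P$; on Bloch vectors this map is $\mathbf w\mapsto\mathbf t+T\mathbf w$, so in trace distance it is a contraction with ratio $\lambda:=\|T\|<1$. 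Hence from \emph{any} state, $T$ idle noisy steps bring a qubit within $\lambda^{T}$ of $P$, uniformly — this uniformity is what lets recycled junk qubits re‑thermalize.

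The refrigeration unit is a coherent (unitary) version of classical data compression. Given $R$ qubits each within $\gamma$ of $P$, by subadditivity of trace distance the chunk is within $R\gamma$ of $P^{\otimes R}$, a diagonal i.i.d.\ source of entropy $Rh(\eta)$ (with $h$ the binary entropy). Because $h(\eta)<1$ there is a constant gap $g:=1-h(\eta)>0$ depending only on $P$, so for $R$ a sufficiently large constant a permutation of the $2^{R}$ computational basis strings — hence a unitary, implemented by a reversible circuit of $\mathrm{poly}(R)$ size and depth $d_R$ — carries all $2^{Rh(\eta)+o(R)}$ typical strings to strings whose last bit is $0$. Applying it and discarding the first $R-1$ qubits leaves a qubit equal to $\ket 0$ up to error $2^{-\Omega(gR)}$ from the atypical mass. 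Taking $R=O(1/g)$ large enough makes this, together with the $R\gamma$ input error and the $O(d_R p_T)$ error accumulated during the (noisy) depth‑$d_R$ compression circuit, lie below the ancilla‑preparation threshold of a fault‑tolerance scheme. Running $O(n')$ units in parallel yields $n'$ near‑fresh qubits per invocation.

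Now assemble and schedule. The ``compute'' register runs the unitary‑control fault‑tolerant simulation of \cite{AB08} of the input computation: width $n'=O(n\polylog(nD))$, depth $D'=O(D\polylog(nD))$, demanding $\le n'$ fresh ancillas per step. Group time into macro‑steps of $d_R$ micro‑steps; in each macro‑step the refrigerator ripens and compresses $n'$ chunks while the compute register (padded with idle rounds, at the cost of raising its per‑step noise to $O(d_R p_T)$) advances one fault‑tolerant step using the previous batch, and the $n'(R-1)$ dirty qubits together with the $\sim n'$ spent ancillas are returned to the ``storage house'', where they age $T$ micro‑steps before re‑entering the refrigerator. Preloading the storage house with $\approx n'RT/d_R$ qubits in $\ket 0$ primes the pipeline; it then holds $O(n'R)$ qubits per aging generation and $O(T/d_R)$ generations, so total width is $O(n'R T/d_R)=O(n\polylog(nD))$ and total depth $d_R D'=O(D\polylog(nD))$, as required.

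The main obstacle is the error budget: a single $p_T$ depending on $P$ must simultaneously (i) keep the noise on the compute register below the fault‑tolerance threshold; (ii) through the choice of $R$ — which must grow as $P\to I/2$, whence the dependence on $P$ — make the atypical‑mass error $2^{-\Omega(gR)}$ negligible; (iii) with $\gamma$ then chosen small and $T=O(\log(1/\gamma)/(1-\lambda))$ chosen so that aging reaches $\gamma$‑freshness, make the $R\gamma$ and $O(d_R p_T)$ refrigeration errors negligible; and (iv) leave the net fresh‑ancilla error below the ancilla‑preparation threshold. Fixing the constants in the order $R$, then $\gamma$, then $T$, then $p_T$ closes the loop. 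The remaining ingredients — that standard fault tolerance tolerates a constant rate of ancilla‑preparation errors, the Schumacher‑type compression estimate for not‑quite‑identical copies, and the timing bookkeeping of the pipeline — are routine.
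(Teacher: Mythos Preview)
Your architecture and refrigeration mechanism match the paper's, and the order in which you fix $R,\gamma,T,p_T$ is sensible, but there is a genuine gap in the error analysis. You take $T$ to be a constant depending only on $P$ and $\gamma$, argue that each reset qubit is individually within a constant of $\ket{0}$, and then appeal to ``standard fault tolerance tolerates a constant rate of ancilla-preparation errors''. The trouble is that the reset qubits are not independent of the data: a dirty qubit entering storage at time $t$ was just used as a syndrome ancilla and is entangled with the encoded state; after only $O(1)$ applications of $C$ it retains an $\Omega(\gamma)$ correlation with the data, and this correlation is then fed back in when the qubit is reused. The threshold theorems you invoke assume preparation noise that is independent of (or at worst coupled through an external bath to) the data register; they do not obviously cover ancillas whose errors are correlated with the logical state through such a feedback loop. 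A related symptom is your step ``by subadditivity of trace distance the chunk is within $R\gamma$ of $P^{\otimes R}$'': this does not follow from each \emph{marginal} being $\gamma$-close to $P$; what you actually need is the diamond-norm statement $\|C^{T}-C_P\|_\Diamond\le\gamma$ together with subadditivity of the diamond norm.

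The paper closes this gap with a global hybrid argument rather than a per-ancilla bound. It compares the actual protocol to a ``black-box'' variant in which the storage house is replaced by the constant channel $C_P(\rho)=P$. In the black-box protocol every chunk entering the refrigerator is exactly $P^{\otimes R}$ and genuinely independent of everything else, so the ordinary threshold theorem applies directly and the output is $\epsilon_0$-correct. The true protocol differs from the black box by at most $M\cdot\|C^{T}-C_P\|_\Diamond$ in trace distance, where $M\le n'RD'$ is the total number of qubits ever routed through storage. Making this $O(\epsilon_1)$ forces $\|C^{T}-C_P\|_\Diamond\le\epsilon_1/(n'RD')$, i.e.\ $T=\Theta(\log(nD))$ rather than a constant. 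The storage house then holds $O(TRn')=O(n\,\polylog(nD))$ qubits, still within the stated budget. Your constant-$T$ scheme would give a smaller storage house, but without a separate argument controlling the data--ancilla feedback it does not establish correctness.
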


Note that the constants in the depth and number of physical qubits will depend on the channel $C$, and that the threshold may depend on $P$.

\begin{proof}
The basic idea is to perform a concatenated fault tolerance protocol, as usual for threshold proofs, but the fresh ancillas usually used for a threshold protocol will instead be provided by a separate part of the computer, set aside to act as a refrigerator.  In particular, the qubits in the computer will be broken up into 3 components:

\begin{enumerate}

\item \label{computation} The computation to be run, using error
correcting codes and assuming a fresh supply of qubits. This is possible using standard fault tolerant
techniques (see \cite{AB97,Kit97,KLZ98,AGP05}). This computation assumes the error rate per step is below the threshold for fault tolerance, so $p_T$ must be at least as small as the usual threshold.

\item \label{storage} A storage house for ancillas. This is just a
large (but linear in the size of the computation) array of chunks of
qubits. Qubits are going to stay in the storage house for a long time. This means
that they will undergo the noise $C$ many times before being used again.

\item \label{fridge} A refrigerator. This component takes a chunk
out of the storage house, and ``cools down'' a part of it, by encoding
all the information in the beginning of the chunk, and leaving a few
$\ket{0}$ qubits in the end. The cooling process does not
violate the second rule of thermodynamics, as the noisy channel leaves each qubit with non-maximal entropy. Rather the refrigerator
condenses the entropy using algorithmic cooling~\cite{SV99}, and supplies the computation (component 1)
with fresh bits.

\end{enumerate}

At the start of the computation, the ancilla qubits are taken from a fresh supply.  Once they have been used, ancilla qubits are then passed to the storage house where they will sit for a long time $T$.  After a time $T$ passes, groups of $R$ qubits are removed from the storage house and placed into the refrigerator.  The output of the refrigerator is 1 ``reset'' qubit and R-1 ``waste'' qubits.  The reset qubits are then passed to the computation section to act once again as ancillas and the waste qubits are returned to the storage to wait again.  For simplicity, let us assume the storage house begins full of qubits in the state $P$, so that the refrigerator can begin operating immediately, and that the initial supply of fresh ancilla qubits for the computation component only need last until the refrigerator finishes its first cycle.

\begin{figure}
\includegraphics*[scale=0.6]{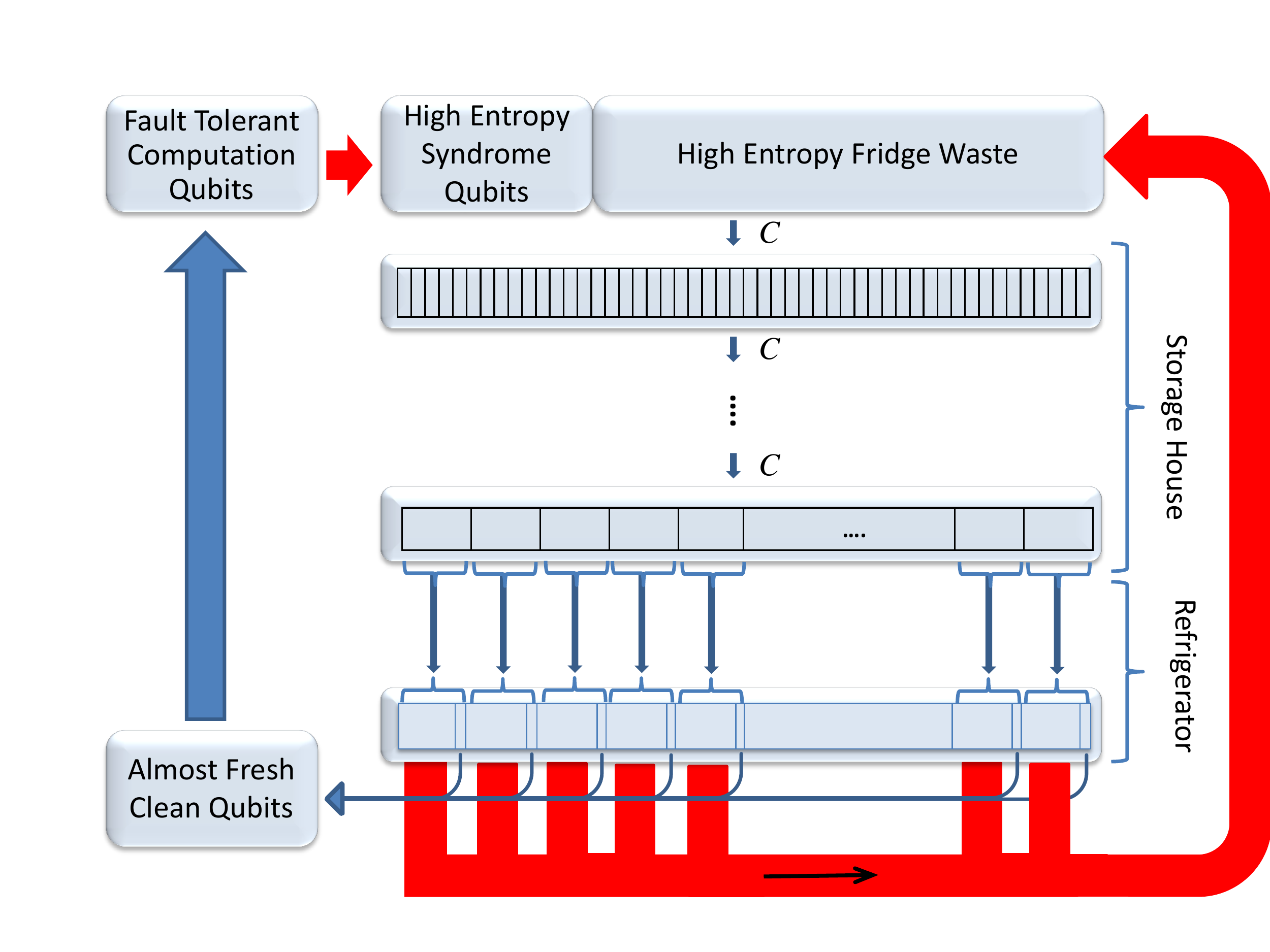}
\caption{The computer consists of three components: A computation component, a storage house, and a refrigerator.}
\end{figure}

The computation component runs a standard concatenated fault-tolerant protocol.  By the threshold theorem, if fresh ancillas are available, a protocol exists using $n' = O(n \polylog(nD/\epsilon_0))$ physical qubits and depth $D' = O(D \polylog(nD/\epsilon_0)$ which achieves a final state with $1$-norm distance at most $\epsilon_0$ from the correct final state of the computation.  Here ``fresh'' ancilla qubits does not mean they are perfect; it is sufficient that they are prepared by a process that is independent from ancilla to ancilla such that each prepared state has $1$-norm distance from $\ket{0} \bra{0}$ which is less than the threshold value. In our case, we do not have fresh ancillas; the ancillas will instead be provided by the other components of the protocol, and may have errors which are very slightly correlated with each other.  However, the computation component is designed just as if the ancillas were actually fresh.

The time $T$ is chosen so that any qubit placed in the storage house for a time $T$ (during which time it experiences no gates other than the channel noise $C$) reaches a state very close to the fixed point $P$ of $C$.  Let $C_P$ be the channel which throws away a qubit and replaces it with $P$: $C_P(\rho) = P$.  Then we choose $T$ so that $\|C^T - C_P \|_\Diamond < \epsilon_1/(n'D'R)$.  Since $C$ is strictly contractive, $C^T$ approaches $C_P$ exponentially, so we may choose $T = O(\log (n'D'R)) = O(\log (nDR))$ to achieve a constant value of $\epsilon_1$.

The refrigerator is a compression algorithm, chosen so that when run with an ideal circuit on $R$ qubits in the state $P$, the output is $1$ ``reset'' qubit in a state $R$ and $R-1$ ``waste'' qubits in any other state.  $R$ is chosen so that $\|R - \ket{0}\bra{0}\|_1 < \epsilon_2$.  The state of the waste qubits is not important, but typically will be close to maximally mixed, indicating we have optimally purified a state.  Also note that the waste qubits may be entangled with the reset qubit, but they cannot be very entangled, since $R$ is close to a pure state.  A refrigerator with these properties can be achieved with a constant value of $R$ and a quantum circuit of constant size $F$.  The size $F$ is a count of the total number of locations in the refrigerator circuit, including wait steps.  ($R$ and $F$ will in general depend on $P$, or at least on $S(P)$.)

Now, in the actual computation, the refrigerator experiences noise as well.  However, at each step, the noise is close to the identity.  A single run of the refrigerator algorithm with the real system therefore will therefore have distance at most $Fp_T$ from the ideal refrigerator.  Therefore, the output of one run of the refrigerator algorithm, if it is run with noise, but with all input qubits in exactly the state $P$, has distance at most $\epsilon_2 + Fp_T$ from the state $\ket{0} \bra{0}$.  Let us choose $\epsilon_2$ and $p_T$ so that $\epsilon_2 + Fp_T$ is below the threshold for fault tolerance.  Also note that at the start of the computation, ancilla qubits used before the first run of the refrigerator ends have experienced at most $F$ locations as well, so their error rates are also below the threshold.

Therefore, if the storage house is replaced by a black box which performs $C_P$ on each qubit placed in it for a time $T$, the refrigerator outputs qubits which are sufficiently close to perfect to satisfy the requirements of the threshold theorem.  Also, since the black box treats each qubit in the storage house separately, and the refrigerator algorithm is independent for each reset qubit, the outputs of the refrigerator will, in this version of the protocol, be independent.  Therefore, when there is a black box storage house, the protocol satisfies the conditions of the threshold theorem, and the final state of the protocol has distance at most $\epsilon_0$ from the correct output of an ideal protocol without noise.

However, the actual effect of the storage house is $C^T$ on each qubit, which is very close to $C_P$.  Therefore, the true protocol will be very close to the one with a black box storage house.  How close?  To figure this out, we need to determine the total number $M$ of qubits that get put through the storage house.  Since $\|C^T - C_P \|_\Diamond < \epsilon_1/(n'D'R)$ for a single qubit, the full protocol will be a distance at most $M \epsilon_1/(n'D'R)$ from the one with a black box storage house.  When $M$ is not too large, even though there may be residual correlations between the ancilla qubits exiting the storage house, the correlations are too weak to have a big effect on the protocol.

At each time step, the computation component uses at most $n'$ ancillas, each of which must be output by a refrigerator protocol which uses $R$ qubits from the storage house.  Therefore, we need at most $n'R$ qubits to come out of the storage house at each time step, and there are at most $D'$ time steps in the full fault-tolerant protocol, so $M \leq n'RD'$.  The overall protocol thus has distance at most $\epsilon_1$ from the black box protocol, which in turn outputs a final state with distance at most $\epsilon_0$ from the ideal protocol.

By choosing $\epsilon_0 = \epsilon_1 = \epsilon/2$ for any constant $\epsilon$, we therefore get a fault-tolerant protocol which outputs a state within distance $\epsilon$ of the correct output state, as desired.  At any time, the computation component uses $n'$ qubits, the storage house contains at most $TRn'$ qubits, and the refrigerator contains at most $FRn'$ qubits.  Therefore the total number of qubits used is at most $(1+TR + FR)n' = O(Tn') = O(n \polylog(nD))$.  The depth blow-up of the computation is the same as for the usual fault-tolerant protocol; that is, the depth is $D' = O(D \polylog(nD))$.
\end{proof}

While this protocol works to simulate a circuit of any depth $D$, there is an overhead cost that depends polylogarithmically on $D$.  Therefore, if $D$ is superexponential, the number of qubits needed will be superpolynomial.  Probably it is impossible to compute for more than exponential time; for instance, for the amplitude damping channel, there is a probability at least $p^n$ at each time step that all qubits will relax to $\ket{0}$, leaving no remaining information about the input to the computation.  However, we have not yet been able to prove this for a general non-unital channel.

%
%

It is worth noting that theorem~\ref{thm:nonunital} still holds when quantum gates are restricted to nearest-neighbor interactions with a two-dimensional geometry, but we are not certain if it holds or not for a one-dimensional architecture.  To implement the procedure in two dimensions, we lay out the ``computation'' component of the system along a line.  It is possible to do this and still have a threshold error rate for fault tolerance~\cite{AB08,Got00}.  The ancilla qubits needed for operations on each block of the quantum error-correcting code used in the protocol are prepared in a strip of constant width laid out orthogonal to the computation qubits, as in figure~\ref{fig:local}.  Each strip contains a separate ``storage'' component and ``refrigerator'' component.  In the context of a concatenated code, only the ancillas needed for operations at the lowest-level blocks need to be reset --- higher-level ancilla qubits are built up from individual physical qubits, and when the physical qubits are nearly pure, the higher-level ancillas can be made from them in the usual way.

\begin{figure}
\includegraphics*[scale=0.6]{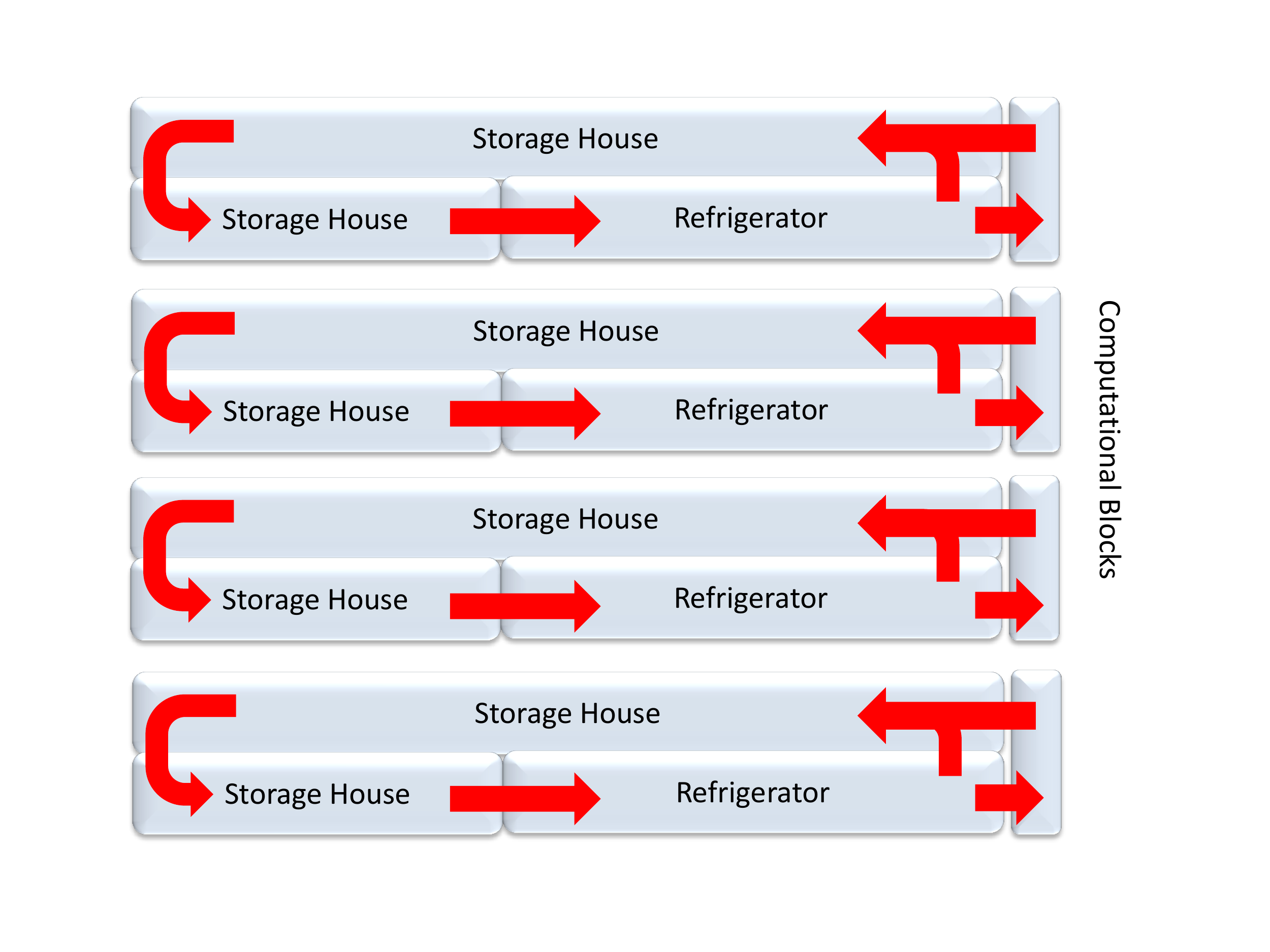}
\caption{Each block of the code has a separate storage house and refrigerator for a two-dimensional layout.}
\label{fig:local}
\end{figure}

Each strip is itself divided into two halves.  In one half, qubits are moved away from the computation component by one place each time step.  In the other half, qubits are moved one step closer to the computational component at each time step.  At the end of the strip away from the computation component, qubits are shifted from the half of the strip moving away to the half moving towards the computational component.  Thus, if there were no other operations going on, qubits would travel to the end of the strip and then come back.  The part of the strip moving away, and perhaps part of the half returning to the computational component, is devoted to the storage component.  Qubits in it undergo no operations beyond moving along the strip.  We make the strip sufficiently long so that qubits remain in the storage house for a long enough time $T$ to be close to the fixed point $P$.

The last part of the returning half of the strip implements the refrigerator component.  The algorithmic cooling used in the refrigerator uses only a constant size circuit to prepare a constant number of ancillas (which is all that is needed for a single time step for a single block of the quantum error-correcting code), so a strip of constant width is adequate for this purpose.  Once qubits in the strip have reached the computational component, they have been separated out into ``reset'' qubits, which are almost pure and can be used as ancillas in the computation, and ``waste'' qubits, which are returned to the storage house in the outgoing half of the strip.

In one dimension, this strategy doesn't appear to work.  We could attempt to intersperse level $1$ computational blocks with storage and fridge blocks, but as the error rate gets small, the storage house must get bigger and bigger since it takes longer for qubits to relax to the fixed point of the channel.  This means the computational blocks get further and further apart, making it costly to do gates between them, which in turn means the error rate for logical gates is higher, offsetting the advantage of the fault-tolerant protocol.  It is not clear if there is a cleverer approach which allows exponentially long computations in one dimension with non-unital noise.

Note also that in this design, we are assuming that the original circuit to be simulated is laid out using nearest-neighbor gates in one dimension, yet we are implementing it in two dimensions.  If the ideal circuit is instead given to us in a two- or higher-dimensional layout, we have two choices.  Either we can use the approach above and add one dimension to the layout for the storage house and refrigerator components, or we can rewrite the ideal circuit into a one-dimensional one and then implement it in two dimensions.  The latter choice potentially incurs an $O(n)$ blow-up in the depth of the circuit.  It may be that there is an alternative approach to simulate two-dimensional circuits without fresh ancillas without adding a dimension and with only polylog overhead.

\section{Conclusion}

We have shown that the usual claim that a quantum computation needs fresh ancilla qubits supplied from the outside is not completely correct.  Rather, we have shown that if the noise in the system is non-unital, then we can co-opt the noise to create fresh ancilla qubits, allowing us to recycle qubits and perform computation even in an otherwise closed system.  We can do this because non-unital noise provides some cooling, taking a maximally mixed state to a less mixed state.  For instance, if the system is in thermal contact with a cold bath, qubits left in the storage house will tend to equilibrate to the temperature of the bath.  Viewed this way, of course the system is not truly closed, but it is open in only a single limited way.

Moreover, our protocol works for \emph{any} fixed point of the non-unital map.  Therefore, even a very hot bath, provided it is at finite temperature, can provide enough cooling to leverage into a full-blown fault-tolerant computation.  Of course, the overhead is very high when the fixed point $P$ is close to maximally mixed, but it is perhaps surprising that it can still be done at all.  One might instead have expected a threshold temperature above which the system is too hot to compute.  There is a threshold \emph{error strength}, but that is a constraint on the coupling to the bath rather than the temperature of the bath.

The dephasing class of channels is another interesting case, sitting on the borderline between the depolarizing class, for which long computations are impossible, and the amplitude damping class, for which computation is possible.  A pure dephasing channel, with no other sources of noise at all, is physically improbable, but in many physical systems, dephasing is the dominant error source.  This suggests that in early quantum computers, it might be sufficient to keep a stock of ancilla qubits from the start and use them up over the course of the computation --- the polynomial overhead we derive for this case --- and only resort to resetting ancilla qubits either via external control or internal refrigeration once larger quantum computers are built.  Indeed, this is the strategy already envisioned for many early quantum computers.  The dephasing class is also subject to an interesting gap between the upper and lower bounds.  Certainly it is possible to store a single EPR pair for $O(n/\log n)$ time using a good QECC, but we have only been able to show that the upper bound on the storage time is $O(n^3)$.  Can this upper bound be improved, or is there some storage procedure that is better than the obvious one?

Finally, we have not considered channels for qudits of dimension greater than $2$.  We expect essentially the same classification to apply, based essentially on the ``best'' two-dimensional subspace for the channel.  Channels which have a single fixed point at the maximally mixed state should still make computation impossible, whereas non-unital channels should still provide some cooling and thus make it possible to run exponentially long quantum computations.  Unital channels which leave some pure states unchanged should act like the dephasing class, allowing polynomially-long quantum computations.  However, qudit channels are more complicated than qubit channels, so we have not proven that these are all the possible behaviors.

\paragraph{Acknowledgments}
We would like to thank an anonymous referee for QIP, who asked if the protocol still worked with local gates.  Michael Ben-Or is the incumbent of the Jean and Helena Alfassa Chair in Computer Science and a member of the I-CORE Center of Excellence in Algorithms.  His research was supported in part by the Israeli Science Foundation (ISF) research grant 1446/09.  Research at Perimeter Institute is supported by the Government of Canada through Industry Canada and by the Province of Ontario through the Ministry of Economic Development and Innovation.  Daniel Gottesman is also supported by the CIFAR Quantum Information Processing program.  Avinatan Hassidim is supported by GIF Young grant 2308/2011 and by the ISF, grant number 1241/12.


\end{document}